\DeclareMathOperator{\Tr}{Tr}
\newcommand{\QSL}{QSL\xspace}
\newcommand{\QSLs}{QSLs\xspace}
\newcommand{\limitingp}{\pi/2}
\newcommand{\aveE}[1]{\langle |E|^{#1}\rangle}
\newcommand{\dE}[1]{{\mathscr D}_{#1}E}
\newcommand{\dmu}[1]{d_{#1,\vec{\mu}}}
\newcommand{\dmuopt}[1]{d^\triangledown_{#1,\vec{\mu}}}
\newcommand{\desmu}[1]{\mu^\downarrow_{#1}}
\newcommand{\absarg}[2]{|\theta|^\downarrow_{#2}(#1)}
\newcommand{\dg}{d_g}
\newcommand{\dpg}{d^\triangledown_g}
\newcommand{\Eeigket}[1]{|E_{#1}\rangle}
\newcommand{\Eeigbarebra}[1]{\langle E_{#1}}
\newcommand{\Eeigbra}[1]{\Eeigbarebra{#1}|}
\newtheorem{lemma}{Lemma}
\newtheorem{proposition}{Proposition}
\newtheorem{corollary}{Corollary}
\newtheorem{remark}{Remark}
\begin{document}

\title{Relation Between Quantum Speed Limits And Metrics On
 $\boldsymbol{U(n)}$}

\author{Kai-Yan Lee} \altaffiliation[Present address: ]{Department of Astronomy
 and Oskar Klein Centre for Cosmoparticle Physics, Stockholm University,
 Albanova, SE-10691 Stockholm, Sweden} \email{lee.kai\_yan@astro.su.se}
\author{H.\ F. Chau}\email[Corresponding author; ]{hfchau@hku.hk}
 \affiliation{Department of Physics and Center of Computational and Theoretical Physics\\
 University of Hong Kong, Pokfulam Road, Hong Kong}
 
\date{\today}

\begin{abstract}
 Recently, Chau [Quant. Inform. \& Comp. \textbf{11}, 721 (2011)] found a
 family of metrics and pseudo-metrics on $n$-dimensional unitary operators that
 can be interpreted as the minimum resources (given by certain tight quantum
 speed limit bounds) needed to transform one unitary operator to another.
 This result is closely related to the weighted $\ell^1$-norm on
 ${\mathbb R}^n$.
 Here we generalize this finding by showing that every weighted $\ell^p$-norm
 on ${\mathbb R}^n$ with $1\le p \le \limitingp$ induces a metric and a
 pseudo-metric on $n$-dimensional unitary operators with quantum
 information-theoretic meanings related to certain tight quantum speed limit
 bounds.
 Besides, we investigate how far the correspondence between the existence of
 metrics and pseudo-metrics of this type and the quantum speed limits can go.
\end{abstract}

\pacs{03.65.Aa, 03.67.-a, 89.70.Eg}

\maketitle

\section{Introduction}\label{Sec:intro}

 Distinguishing two quantum operations and characterizing the resources needed
 to carry out a quantum operation are two meaningful problems in quantum
 information science.
 Various authors have studied the former problem.
 For instance, the problem of unambiguously distinguishing two quantum
 operators have been extensively studied~\cite{WY06,Chefles07,DFY07}.
 Whereas one way to attack the latter problem is through the so-called quantum
 speed limits (\QSLs) which put lower bounds on the evolution time needed to
 perform a unitary operation~\cite{PHYSCOMP96,ML98,GLM03,Chau10}.

 Given a Hamiltonian and an initial state, the evolution time $\tau$ needed to
 perform a unitary operation generated by the Hamiltonian is fixed.
 However, no explicit analytical expression for $\tau$ is known to date.
 The study of QSL makes use of a simple compromise to the above problem by
 asking what $\tau$ could be if a partial description of the quantum system,
 such as the energy standard deviation~\cite{Bh83}, the energy of the system
 above its ground state~\cite{PHYSCOMP96,ML98,GLM03} and the average absolute
 deviation from the median of the energy of the state~\cite{Chau10}, is given.
 Surely, the partial information given is not sufficient to deduce $\tau$.
 Yet surprisingly, non-trivial constraints in the form of explicit evolution
 time lower bounds (called \QSL bounds) can be deduced.  
 Moreover, these bounds are tight in the sense that for each of the above \QSL
 bound, we can find an initial state and a Hamiltonian generating the unitary
 operation such that the required evolution time is equal to the
 lower bound~\cite{Bh83,PHYSCOMP96,ML98,GLM03,Chau10}.
 Interestingly, these bounds are mutually complimentary in the sense that none
 of them always gives a better evolution time lower bound than the
 others~\cite{Chau10}.
 And this is not unexpected for each of these \QSL bounds are deduced using
 different partial information describing the quantum system.

 Note that a few \QSLs have geometric meanings.
 For instance, the well-known time-energy uncertainty relation~\cite{Bh83}
 comes from the Bures metric on the group of unitary operators~\cite{Uh92}.
 And the recently discovered families of metrics and pseudo-metrics on the
 group of $n$-dimensional unitary matrices $U(n)$ by Chau~\cite{Chau11} are
 closely related to a \QSL involving the average absolute deviation from the
 median energy.
 Actually, for any $U,V\in U(n)$, these metrics and pseudo-metrics can be
 written as certain weighted sums of the absolute value of the argument of
 eigenvalue of the unitary matrix $U V^{-1}$; hence, they are related to
 certain weighted average of the absolute value of the energy eigenvalues of
 the Hermitian operator generating $U V^{-1}$.
 Lately, Chau \emph{et al.} went further to show that these families of metrics
 and pseudo-metrics can be induced by the symmetric weighted $\ell^1$-norm on
 ${\mathbb R}^n$~\cite{CLPS12}.
 (We will define symmetric weighted $\ell^p$-norm for $p\ge 1$ in
 Sec.~\ref{Sec:Norm}.)
 More importantly, they~\cite{CLPS12} interpreted these metrics and
 pseudo-metrics as the consequence of certain ``reasonable'' cost functions
 to implement a unitary operation given by the tight \QSL bound reported in
 Ref.~\cite{Chau10}.
 (We will clarify what we mean by a ``reasonable'' cost function in
 Sec.~\ref{Sec:Connection}.)

 It is instructive to study how close the relation between the implementation
 cost in terms of, say, certain \QSLs and the existence of metrics or
 pseudo-metrics on $U(n)$ induced by such cost functions.
 Here we extend the findings by Chau and his
 co-workers~\cite{Chau10,Chau11,CLPS12} by proving the following results.
 First, for $p \ge 1$, there are metrics and pseudo-metrics on $U(n)$ that are
 functions of $|\theta_j|^p$'s where $e^{i\theta_j}$'s are the eigenvalues of
 the unitary matrix $U V^{-1}$ with $\theta_j \in (-\pi,\pi]$ for all $j$.
 In fact, these metrics and pseudo-metrics are induced by certain symmetric
 weighted $\ell^p$-norms on ${\mathbb R}^n$.
 Second, for every $p > 0$, there are two \QSL bounds.
 One involves $\aveE{p}^{1/p}$, the $p$th root of the $p$th moment of the
 absolute value of energy of the system; and the other involves $\dE{p}$, which
 is an optimized version of $\aveE{p}^{1/p}$ by exploiting the freedom of
 choosing the reference energy level.
 Most importantly, these bounds are tight for $p \le \limitingp$.
 Thus, for $1\le p \le \limitingp$, the metrics and pseudo-metrics reported
 here can be interpreted as the minimum resources needed (through the tight
 \QSL bounds involving $\aveE{p}$ and $\dE{p}$ respectively) to convert one
 unitary operator to another.
 Nevertheless, our findings imply that for $p < 1$ or $p \ge \limitingp$, this
 close relation between the metric / pseudo-metric and the \QSL breaks down
 because either the induced metric / pseudo-metric no longer exits or the \QSL
 is no longer a tight bound.
 This work is a refinement and improvement of the research reported in the
 M.Phil.\ thesis of the first author~\cite{Lee11}.

\section{Metrics And Pseudo-metrics Induced By Weighted
 $\boldsymbol{\ell^p}$-Norms}
\label{Sec:Norm}

 We say that a function $g \colon {\mathbb R}^n \to [0,\infty)$ a
 \textbf{symmetric norm} on ${\mathbb R}^n$ if $g$ is a norm on ${\mathbb R}^n$
 satisfying $g({\mathbf v}) = g({\mathbf v}P)$ for any ${\mathbf v} \in
 {\mathbb R}^n$, and any permutation matrix or diagonal orthogonal
 matrix $P$.

 Recall that for any fixed $p \ge 1$, a weighted $\ell^p$-seminorm on
 ${\mathbb R}^n$ is a function $h \colon {\mathbb R}^n \to [0,\infty)$ in the
 form $h({\mathbf v}) \equiv h(v_1,v_2,\dots,v_n) = ( \sum_{j=1}^n \mu_j
 |v_j|^p )^{1/p}$ for some $\mu_j \ge 0$ for all $j$.
 Surely, a weighted $\ell^p$-seminorm is indeed a seminorm on ${\mathbb R}^n$.

 For any weighted $\ell^p$-seminorm $h$, we may define
\begin{align}
 g({\mathbf v}) &= \max_P h(v_{P(1)},v_{P(2)},\dots,v_{P(n)}) \nonumber \\
 &= \left[ \sum_{j=1}^n \desmu{j} \left( |v|^\downarrow_j \right)^p
  \right]^{\frac{1}{p}} ,
 \label{E:symmetric_norm}
\end{align}
 where the maximum is over all permutations $P$ of $\{1,2,\dots,n\}$.
 Besides, $\desmu{j}$ and $|v|^\downarrow_j$ denote the $j$th largest number in
 the sequences $(\mu_1,\mu_2,\dots,\mu_n)$ and $(|v_1|,|v_2|,\dots,|v_n|)$,
 respectively.
 It is straightforward to check that $g$ is a symmetric norm on
 ${\mathbb R}^n$ provided that not all $\mu_j$'s are $0$; and we call this
 particular type of symmetric norm the \textbf{symmetric weighted
 $\boldsymbol{\ell^p}$-norm}.
 (By taking the limit $p \to +\infty$, we have a symmetric weighted
 $\ell^\infty$-norm.
 This symmetric weighted $\ell^\infty$-norm is a special case of symmetric
 weighted $\ell^1$-norm in which all but one of the weights $\mu_j$ are $0$.
 So, we will not pay particular attention to symmetric weighted
 $\ell^\infty$-norm any further.)

 For any symmetric weighted $\ell^p$-norm on ${\mathbb R}^n$, we may apply the
 following result by Chau \emph{et al.} in Ref.~\cite{CLPS12} to induce a
 metric and a pseudo-metric on $U(n)$:

\begin{proposition}[Chau \emph{et al.}]
 For any given symmetric norm $g: {\mathbb R}^n \to [0,\infty)$, we may define
 a metric $\dg$ and a pseudo-metric $\dpg$ on $U(n)$ by
 \begin{equation}
  \dg(U,V) = g(\absarg{U V^{-1}}{1},\dots,\absarg{U V^{-1}}{n})
  \label{E:d_defined}
 \end{equation}
 and
 \begin{equation}
  \dpg(U,V) = \min_{x \in {\mathbb R}} g(\absarg{e^{i x} U V^{-1}}{1},\dots,
  \absarg{e^{i x} U V^{-1}}{n}) .
  \label{E:d_pseudo_defined}
 \end{equation}
 Here $\absarg{UV^{-1}}{j}$ denotes the $j$th largest number in the sequence
 $( |\theta_1|, |\theta_2|, \dots, |\theta_n| )$ with $e^{i \theta_j}$'s being
 the eigenvalues of $U V^{-1}$ obeying $-\pi < \theta_j \le \pi$.
 \label{Prop:Norm}
\end{proposition}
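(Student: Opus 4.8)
The plan is to check the metric axioms for $\dg$ and the pseudo-metric axioms for $\dpg$ in turn, isolating the triangle inequality as the only substantial step. Non-negativity is immediate because $g$ takes values in $[0,\infty)$. Symmetry holds because $VU^{-1}=(UV^{-1})^{-1}$ has eigenvalues that are the complex conjugates of those of $UV^{-1}$, so the two operators carry the same multiset of absolute arguments (under the convention $-\pi\mapsto\pi$), and $g$ is permutation invariant; hence $\dg(U,V)=\dg(V,U)$. For the identity of indiscernibles, $\dg(U,V)=0$ forces the argument vector to be annihilated by $g$, and since $g$ is a genuine norm this occurs iff every $\theta_j=0$, i.e.\ iff $UV^{-1}=I$, i.e.\ iff $U=V$. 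Everything thus reduces to a single product inequality, the \emph{core estimate}: for all $A,B\in U(n)$,
\begin{equation}
 g\bigl(\absarg{AB}{1},\dots,\absarg{AB}{n}\bigr)\le g\bigl(\absarg{A}{1},\dots,\absarg{A}{n}\bigr)+g\bigl(\absarg{B}{1},\dots,\absarg{B}{n}\bigr).
 \label{E:core}
\end{equation}
Setting $A=UV^{-1}$ and $B=VW^{-1}$ then yields the triangle inequality for $\dg$.

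I would first strip the norm structure off \eqref{E:core}. Writing $a=(\absarg{A}{1},\dots,\absarg{A}{n})$ and $b=(\absarg{B}{1},\dots,\absarg{B}{n})$, both already in non-increasing order, the entrywise sum $a+b$ is again non-increasing, so subadditivity of $g$ gives $g(a+b)\le g(a)+g(b)$. It therefore suffices to prove the weak-majorization relation
\begin{equation}
 \sum_{j=1}^k\absarg{AB}{j}\le\sum_{j=1}^k\bigl(\absarg{A}{j}+\absarg{B}{j}\bigr)\qquad(1\le k\le n),
 \label{E:weakmaj}
\end{equation}
because a symmetric norm is monotone under weak majorization of non-negative vectors (Ky Fan dominance), which upgrades \eqref{E:weakmaj} to $g(\absarg{AB}{1},\dots,\absarg{AB}{n})\le g(a+b)$. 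The two facts about symmetric norms used here are standard and I would simply cite them, leaving \eqref{E:weakmaj} as the analytic core.

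The heart of the proof, and the step I expect to be the main obstacle, is \eqref{E:weakmaj}; I would establish it by a path-length estimate on $U(n)$. Observe that $\absarg{A}{j}=\sigma_j(H_A)$, the $j$th largest singular value of the principal Hermitian logarithm $H_A$ of $A$ (eigenvalues in $(-\pi,\pi]$), so the left side of \eqref{E:weakmaj} equals the Ky Fan $k$-norm $\|H_{AB}\|_{(k)}$. Consider the concatenated path $\gamma(t)=e^{itH_A}$ on $[0,1]$ and $\gamma(t)=Ae^{i(t-1)H_B}$ on $[1,2]$, running from $I$ to $A$ to $AB$; its right logarithmic derivative $\dot\gamma\gamma^{-1}=iK(t)$ has $K=H_A$ on the first leg and $K=AH_BA^{-1}$ on the second, so $\int_0^2\|K(t)\|_{(k)}\,dt=\|H_A\|_{(k)}+\|H_B\|_{(k)}=\sum_{j=1}^k(\absarg{A}{j}+\absarg{B}{j})$ (singular values are conjugation invariant). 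Then \eqref{E:weakmaj} follows from the velocity bound
\begin{equation}
 \frac{d}{dt}\sum_{j=1}^k\absarg{\gamma(t)}{j}\le\|K(t)\|_{(k)}\quad\text{a.e.,}
 \label{E:velocity}
\end{equation}
integrated over $[0,2]$. To obtain \eqref{E:velocity} at a smooth point I would use first-order perturbation: if $e^{i\theta_j(t)}$ has unit eigenvector $|\psi_j\rangle$ then $\dot\theta_j=\langle\psi_j|K(t)|\psi_j\rangle$, so the derivative of the sum is $\sum_{j}\mathrm{sgn}(\theta_j)\langle\psi_j|K|\psi_j\rangle$ over the $k$ largest $|\theta_j|$; bounding each term by $\langle\psi_j||K||\psi_j\rangle$ gives $\Tr(P|K|)\le\sum_{j=1}^k\sigma_j(K)=\|K\|_{(k)}$ for the rank-$k$ projector $P$ onto those eigenvectors. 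The delicate bookkeeping is exactly this: the principal argument is non-smooth when an eigenvalue crosses $-1$ (where $|\theta_j|$ peaks at $\pi$ and then decreases, so the sum can only drop) and when two $|\theta_j|$ cross (where the top-$k$ index set changes). I would dispose of both by noting that $t\mapsto\sum_{j=1}^k\absarg{\gamma(t)}{j}$ is continuous and piecewise smooth with uniformly bounded one-sided derivatives, hence Lipschitz and absolutely continuous, so \eqref{E:velocity} need hold only off a finite set; eigenvalue degeneracies cause no difficulty, since the bound $\sum_j|\langle\psi_j|K|\psi_j\rangle|\le\|K\|_{(k)}$ holds for every orthonormal system regardless of how an eigenbasis is chosen.

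For the pseudo-metric $\dpg$, note that $\dpg(U,V)=\min_x\dg(e^{ix}U,V)$ and that $x\mapsto\dg(e^{ix}U,V)$ is continuous and $2\pi$-periodic, so the minimum is attained on a compact interval; non-negativity and symmetry carry over, and $\dpg(U,U)=0$ by taking $x=0$. For the triangle inequality, pick phases $x_1,x_2$ optimal for $(U,V)$ and $(V,W)$ and apply the core estimate \eqref{E:core} to $A=e^{ix_1}UV^{-1}$ and $B=e^{ix_2}VW^{-1}$, whose product is $e^{i(x_1+x_2)}UW^{-1}$; taking the outer minimum over phases gives $\dpg(U,W)\le\dpg(U,V)+\dpg(V,W)$. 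Positive-definiteness genuinely fails, since $\dpg(U,e^{ix}U)=0$ for every global phase $x$, which is precisely why $\dpg$ is only a pseudo-metric.
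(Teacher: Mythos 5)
The paper offers no proof of this proposition at all: it is imported from Ref.~\cite{CLPS12}, so there is nothing in the text to compare your argument against line by line. Taken on its own terms, your proof is correct, and its skeleton --- reduce everything to the product inequality $g(\absarg{AB}{1},\dots,\absarg{AB}{n})\le g(\absarg{A}{1},\dots,\absarg{A}{n})+g(\absarg{B}{1},\dots,\absarg{B}{n})$, then reduce that to the weak majorization $\sum_{j=1}^k\absarg{AB}{j}\le\sum_{j=1}^k(\absarg{A}{j}+\absarg{B}{j})$ via Fan dominance and subadditivity of $g$ --- is the natural one for a result of this type. Two remarks on that reduction: Fan dominance requires $g$ to be an \emph{absolute} norm, which is precisely where the paper's insistence on invariance under diagonal orthogonal matrices (not merely permutations) is used, and you should say so; and your symmetry argument correctly flags the only subtlety, namely that conjugating an eigenvalue $e^{i\pi}$ lands back at $e^{i\pi}$ under the convention $\theta\in(-\pi,\pi]$. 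What you actually supply beyond citation is a proof of the weak majorization itself, by a path-length (Finsler) argument on $U(n)$: the concatenated path $I\to A\to AB$ has Ky Fan $k$-length $\|H_A\|_{(k)}+\|H_B\|_{(k)}$, and the pointwise velocity bound shows the endpoint quantity cannot exceed the length. This is a legitimate and essentially standard route (minimality of one-parameter subgroups for bi-invariant Finsler structures on $U(n)$), and your handling of the first-order perturbation step, including degeneracies via $\Tr(P|K|)\le\|K\|_{(k)}$, is sound. The two places where you lean on unstated regularity are (i) the existence of piecewise-smooth (in fact real-analytic) eigenvalue branches along the analytic leg $s\mapsto Ae^{isH_B}$, which is Rellich--Kato for one-parameter analytic normal families and should be cited rather than assumed, and (ii) the absolute continuity of the top-$k$ sum across the branch cut at $-1$, across zeros of $\theta_j$, and across reorderings, which you correctly dispose of by noting that a finite maximum of Lipschitz functions is Lipschitz so that the a.e.\ derivative bound integrates. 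Neither is a gap, but each deserves an explicit sentence in a written-up version.
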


\begin{corollary} \label{Cor:metric_to_qsl}
 Suppose $p \ge 1$.
 Then,
 \begin{widetext}
 \begin{equation}
  \dmu{p}(U,V) = \left\{ \sum_{j=1}^n \desmu{j} \left[ \absarg{U V^{-1}}{j}
  \right]^p \right\}^{\frac{1}{p}} = \left( \sum_{j=1}^n \mu_j
  \right)^{\frac{1}{p}} \min_{H t \colon \exp ( - i H t / \hbar ) = U V^{-1}}
  \quad \max_{|\phi\rangle\in C(H,\vec{\mu})} \ \left[ \aveE{p}(H,|\phi\rangle)
  \right]^\frac{1}{p} t
  \label{E:interpret_l_p_d}
 \end{equation}
 and
 \begin{align}
  \dmuopt{p}(U,V) &= \min_{x\in {\mathbb R}} \ \dmu{p}(e^{i x}U,V)
   = \min_{x\in {\mathbb R}} \left\{ \sum_{j=1}^n \desmu{j} \left[
   \absarg{e^{i x} U V^{-1}}{j} \right]^p \right\}^{\frac{1}{p}} \nonumber \\
  &= \min_{x\in {\mathbb R}} \left( \sum_{j=1}^n \mu_j \right)^{\frac{1}{p}}
   \min_{H t \colon \exp ( - i H t / \hbar) = e^{i x} U V^{-1}} \quad
   \max_{|\phi\rangle\in C(H,\vec{\mu})} \ \dE{p}(H,|\phi\rangle) \ t
  \label{E:interpret_l_p_dopt}
 \end{align}
 \end{widetext}
 are metric and pseudo-metric on $U(n)$, respectively.
 Here $C(H,\vec{\mu})$ is the set of all (normalized) state kets in the form
 $\sum_{j=1}^n \alpha_j |E_{P(j)}\rangle$, $|E_j\rangle$ is the energy
 eigenstate of $H$ with energy $E_j$, $|\alpha_j|^2 = \mu_j / \sum_k \mu_k$,
 and $P$ is a permutation of $\{1,2,\dots,n\}$.
 Also,
 \begin{equation}
  \aveE{p} \equiv \aveE{p}(H,|\phi\rangle) = \Tr (|H|^p
  |\phi\rangle\langle\phi|) = \langle\phi| |H|^p |\phi\rangle \label{E:Ep_Def}
 \end{equation}
 is the $p$th moment of the absolute value of energy of the system and
 \begin{equation}
  \dE{p} \equiv \dE{p}(H,|\phi\rangle) = \min_{x\in {\mathbb R}} \left[
  \aveE{p} (H- xI,|\phi\rangle) \right]^\frac{1}{p}
  \label{E:DpE_Def}
 \end{equation}
 is the $p$th root of the $p$th moment of the absolute value of energy of the
 system minimized over the reference energy level.
\end{corollary}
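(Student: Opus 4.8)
The plan is to separate the two claims packed into the statement: that $\dmu{p}$ and $\dmuopt{p}$ are a metric and a pseudo-metric, and that each admits the \QSL reading given by the right-hand sides of Eqs.~\eqref{E:interpret_l_p_d} and~\eqref{E:interpret_l_p_dopt}. The first claim is immediate: the leftmost expressions are exactly $\dg(U,V)$ and $\dpg(U,V)$ for $g$ the symmetric weighted $\ell^p$-norm of Eq.~\eqref{E:symmetric_norm}, which is a genuine symmetric norm whenever $p\ge1$ and not all $\mu_j$ vanish. Hence Proposition~\ref{Prop:Norm} applies verbatim and delivers the metric / pseudo-metric property with no further work. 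All the real content therefore lies in verifying the chains of equalities.

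For Eq.~\eqref{E:interpret_l_p_d} I would first evaluate the inner maximum. Writing $|\phi\rangle=\sum_j\alpha_j\Eeigket{P(j)}$ with $|\alpha_j|^2=\mu_j/\sum_k\mu_k$, Eq.~\eqref{E:Ep_Def} gives $\aveE{p}(H,|\phi\rangle)=\sum_j|\alpha_j|^2|E_{P(j)}|^p=(\sum_k\mu_k)^{-1}\sum_j\mu_j|E_{P(j)}|^p$, so maximizing over $|\phi\rangle\in C(H,\vec{\mu})$ is maximizing over the permutation $P$. By the rearrangement inequality this pairs the largest weight with the largest $|E_{P(j)}|^p$, yielding $(\sum_k\mu_k)^{-1}\sum_j\desmu{j}(|E|^\downarrow_j)^p$, where $|E|^\downarrow_j$ is the $j$th largest $|E_j|$. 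Multiplying by the prefactor $(\sum_k\mu_k)^{1/p}$ cancels the normalization and leaves $[\sum_j\desmu{j}(|E|^\downarrow_j)^p]^{1/p}\,t$.

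It remains to carry out $\min_{Ht}$. Setting $K=Ht/\hbar$, the constraint $\exp(-iHt/\hbar)=UV^{-1}$ says that $K$ is Hermitian with $e^{-iK}=UV^{-1}$, i.e.\ its eigenvalues $\epsilon_j$ obey $\epsilon_j\equiv-\theta_j\pmod{2\pi}$, where $e^{i\theta_j}$ are the eigenvalues of $UV^{-1}$ with $\theta_j\in(-\pi,\pi]$. Since $|E|^\downarrow_j\,t=\hbar|\epsilon|^\downarrow_j$, the quantity to be minimized is (in units $\hbar=1$) $[\sum_j\desmu{j}(|\epsilon|^\downarrow_j)^p]^{1/p}$. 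The key observation is that the sorted–weighted sum $\sum_j\desmu{j}(|\epsilon|^\downarrow_j)^p=\max_P\sum_j\mu_j|\epsilon_{P(j)}|^p$ is non-decreasing in each $|\epsilon_k|$ (each summand is, as $\mu_j\ge0$ and $p>0$); hence its minimum is attained by driving every $|\epsilon_k|$ to its smallest admissible value. Because $\theta_j\in(-\pi,\pi]$, the smallest $|\epsilon_j|$ compatible with $\epsilon_j\equiv-\theta_j\pmod{2\pi}$ is $|\theta_j|$, and sorting these gives $|\epsilon|^\downarrow_j=\absarg{UV^{-1}}{j}$, which reproduces the leftmost expression and establishes Eq.~\eqref{E:interpret_l_p_d}.

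Equation~\eqref{E:interpret_l_p_dopt} follows by the same route once one notices what the extra optimizations do. Replacing $H$ by $H-xI$ multiplies $\exp(-iHt/\hbar)$ by the global phase $e^{ixt/\hbar}$, so optimizing the reference energy level inside $\dE{p}$ [Eq.~\eqref{E:DpE_Def}] is precisely optimizing the global phase of $UV^{-1}$, which is the same degree of freedom as the outer $\min_{x\in\mathbb{R}}$ acting through $e^{ix}U$. After absorbing the phase, the maximum over $|\phi\rangle$ and the minimum over $Ht$ are handled exactly as above, so the equalities in Eq.~\eqref{E:interpret_l_p_dopt} reduce to those already proved for $\dmu{p}$. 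The \emph{main obstacle} I anticipate is bookkeeping rather than depth: one must check that the three optimizations—over the permutation $P$, over the Hermitian lift $K$, and over the global phase—decouple and may be performed in any order at the optimum, and that $p\ge1$, needed for the triangle inequality underlying Proposition~\ref{Prop:Norm}, is what makes the leftmost expressions bona fide (pseudo-)metrics. The dimensional factors of $\hbar$ are cosmetic and disappear in natural units.
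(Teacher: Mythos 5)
Your proposal is correct and follows essentially the same route as the paper: invoke Proposition~\ref{Prop:Norm} for the metric/pseudo-metric property, evaluate the inner maximum via the rearrangement of weights, minimize over Hermitian logarithms by taking the principal branch of each eigenphase, and reduce Eq.~\eqref{E:interpret_l_p_dopt} to the same argument after absorbing the global phase. The only difference is that where the paper defers the branch-selection step to Ref.~\cite{Chau11}, you supply a self-contained monotonicity argument for the sorted-weighted sum, which is a welcome (and correct) filling-in of detail.
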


\begin{proof}
 Applying Proposition~\ref{Prop:Norm} to the symmetric weighted $\ell^p$-norm
 in Eq.~\eqref{E:symmetric_norm} gives the first equality in
 Eq.~\eqref{E:interpret_l_p_d} as well as the first line of
 Eq.~\eqref{E:interpret_l_p_dopt}.
 More importantly, it implies that $\dmu{p}$ and $\dmuopt{p}$ are metric and
 pseudo-metric, respectively.

 To show the second equality in Eq.~\eqref{E:interpret_l_p_d}, we adopt the
 strategy used in the proof of Theorem~1 in Ref.~\cite{Chau11}.
 Note that $\aveE{p}(H,\sum_j \alpha_j |E_j\rangle) = \sum_j |\alpha_j|^2
 |E_j|^p$.
 So, the R.H.S. of Eq.~\eqref{E:interpret_l_p_d} becomes $\min [ \sum_j
 \desmu{j} (|E t|^\downarrow_j)^p ]^{1/p}$, where $|E|^\downarrow_j$ denotes
 the $j$th largest element in the sequence $(|E_1|,|E_2|, \dots,|E_n|)$.
 Among those $H t$'s that satisfy $\exp ( -i Ht / \hbar) = U V^{-1}$, the one
 that minimizes $[ \sum_j \desmu{j} (|E t|^\downarrow_j)^p ]^{1/p}$ can always
 be picked in such a way that its eigenvalues all lie in $(-\pi,
 \pi]$~\cite{Chau11}.
 Hence, the R.H.S. of Eq.~\eqref{E:interpret_l_p_d} is reduced to $\{ \sum_j
 \desmu{j} [ \absarg{U V^{-1}}{j} ]^p \}^{1/p}$.

 We omit the proof of the last line of Eq.~\eqref{E:interpret_l_p_dopt} for it
 is essentially the same as that of the second inequality in
 Eq.~\eqref{E:interpret_l_p_d}.
\end{proof}

\begin{remark} \label{Rem:metric}
 From the above proof, we know that Eqs.~\eqref{E:interpret_l_p_d}
 and~\eqref{E:interpret_l_p_dopt} hold irrespective of whether $\dmu{p}$ is a
 metric or not.
 We note further that the special cases of $\dmu{1}$ and $\dmuopt{1}$ are the
 metric and pseudo-metric reported originally by Chau in Ref.~\cite{Chau11}.
 Moreover, it is possible to use an elementary method involving Minkowski
 inequality to show that $\dmu{p}$ and $\dmuopt{p}$ are metric and
 pseudo-metric, respectively.
 Details can be found in the master thesis of the first author~\cite{Lee11}.
\end{remark}

\section{Quantum Speed Limits Via $\boldsymbol{\aveE{p}^\frac{1}{p}}$ Or
 $\boldsymbol{\dE{p}}$}
\label{Sec:QSL}

 We extend the proof concept used by Chau in Ref.~\cite{Chau10} to find these
 \QSLs.
 And we begin with the following lemma.

\begin{lemma} \label{Lem:power_inequality}
 Let $0 < p\le 2$.
 Further let $f(x) = (1-\cos x)/x^p$ for $x > 0$ and $f(0) = \lim_{x\to 0^+}
 f(x)$.
 Then $A_p \equiv \sup \left\{ (1-\cos x) / x^p \colon x > 0 \right\}$ exists
 and is equal to $\max_{x\in [0,\pi]} f(x) > 0$.
 In fact the maximum is attained by a unique $x_c \in [0,\pi]$.
 And this unique $x_c$ is a decreasing function of $p$ with $x_c > 0$ for $p <
 2$ and $x_c = 0$ when $p = 2$.
 Thus,
 \begin{equation}
  \cos x \ge 1 - A_p \ |x|^p \label{E:power_inequality}
 \end{equation}
 for all $x\in {\mathbb R}$ with equality hold if and only if $x = 0, \pm x_c$.
\end{lemma}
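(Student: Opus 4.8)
The plan is to treat this as a careful single-variable calculus problem for the function $f$, first localizing where the supremum can occur and then performing a sharp first-derivative analysis that simultaneously yields existence, uniqueness, monotonicity, and the equality cases.

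First I would reduce the problem to a compact interval. The function $f$ extends continuously to $[0,\infty)$: since $1-\cos x = x^2/2 + O(x^4)$ near the origin, the prescribed value $f(0)=\lim_{x\to0^+}f(x)$ equals $0$ when $p<2$ and $1/2$ when $p=2$, so $f$ is finite and continuous at $0$, while $f(x)\le 2/x^p\to 0$ as $x\to\infty$. The decisive observation is a crude bound for $x\ge\pi$: because $1-\cos x\le 2$ whereas $x^p\ge\pi^p$, we get $f(x)\le 2/x^p\le 2/\pi^p=f(\pi)$, with equality only at $x=\pi$. Hence $\sup_{x>0}f=\sup_{0<x\le\pi}f$, and as $f$ is continuous on the compact set $[0,\pi]$ this supremum is attained there. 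This establishes that $A_p$ exists and equals $\max_{x\in[0,\pi]}f(x)$; positivity is immediate from $f(\pi)=2/\pi^p>0$.

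Next I would locate the maximizer through $f'$. Differentiating gives $f'(x)=x^{-p-1}\left[\,x\sin x-p(1-\cos x)\,\right]$, so on $(0,\pi)$ the sign of $f'$ is that of $x\sin x-p(1-\cos x)$. Applying the half-angle identities $1-\cos x=2\sin^2(x/2)$ and $\sin x=2\sin(x/2)\cos(x/2)$ and factoring, this quantity equals $\sin x\,\left[x-p\tan(x/2)\right]$; since $\sin x>0$ on $(0,\pi)$, the sign of $f'(x)$ coincides with that of $k(x)\equiv x-p\tan(x/2)$. I would then record the elementary facts $k(0)=0$, $k(x)\to-\infty$ as $x\to\pi^-$, and $k'(x)=1-\tfrac{p}{2}\sec^2(x/2)$, which is strictly decreasing from $1-p/2$ at $0$ to $-\infty$ at $\pi^-$.

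The heart of the argument, and the step I expect to be the main obstacle, is the uniqueness-and-monotonicity analysis of $k$. For $p<2$ we have $k'(0)=1-p/2>0$, so $k'$ has a unique zero $x_m$; thus $k$ strictly increases to a positive maximum at $x_m$ and then strictly decreases to $-\infty$, and since $k(0)=0$ it has exactly one zero $x_c\in(x_m,\pi)$, where it changes sign from $+$ to $-$. Consequently $f$ increases on $(0,x_c)$ and decreases on $(x_c,\pi)$, giving the unique maximizer $x_c\in(0,\pi)$ characterized by $x_c=p\tan(x_c/2)$, and crucially $k'(x_c)<0$ because $x_c>x_m$. For $p=2$ one computes $k'(x)=-\tan^2(x/2)<0$ on $(0,\pi)$, so $k<0$ there and $f$ is strictly decreasing, forcing $x_c=0$. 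Monotonicity in $p$ then follows by the implicit function theorem applied to $x_c-p\tan(x_c/2)=0$: with $\partial_x=k'(x_c)<0$ and $\partial_p=-\tan(x_c/2)<0$, one obtains $dx_c/dp=\tan(x_c/2)/k'(x_c)<0$, so $x_c$ strictly decreases in $p$, matching the limit $x_c\to0$ as $p\to2^-$. The delicate point is precisely verifying $k'(x_c)\neq0$ so that the implicit function theorem applies and the sign is unambiguous; this is exactly the observation that $x_c$ lies strictly beyond the maximum $x_m$ of $k$.

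Finally, the inequality and its equality cases drop out. For $x>0$, $\cos x\ge1-A_p|x|^p$ is simply $f(x)\le A_p$; it extends to all real $x$ by the evenness of both sides and holds trivially at $x=0$. For the equality statement I would collect all maximizers of $f$ on $(0,\infty)$: the analysis above yields the single maximizer $x_c$ on $[0,\pi]$, while for $x\ge\pi$ we have $f(x)\le f(\pi)<A_p$, the last inequality strict since $x_c\neq\pi$. Hence $f(x)=A_p$ for $x>0$ only at $x=x_c$, so by evenness equality in the inequality holds precisely at $x=0,\pm x_c$, which collapses to $x=0$ when $p=2$ since then $x_c=0$.
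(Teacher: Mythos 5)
Your proof is correct and follows essentially the same route as the paper's: reduce the supremum to a maximum over $[0,\pi]$, derive the critical-point condition $p\tan(x_c/2)=x_c$, and get uniqueness and monotonicity in $p$ from the monotone behaviour of the associated auxiliary function. The differences are only in execution --- you reach $[0,\pi]$ via the crude bound $f(x)\le 2/x^p\le f(\pi)$ for $x\ge\pi$ instead of the paper's folding inequalities $f(x)>f(x+2\pi)$ and $f(x)>f(2\pi-x)$, and you obtain $dx_c/dp<0$ from the implicit function theorem (after verifying $k'(x_c)<0$) where the paper reads it off directly from the strict increase of $\tilde f(x)=\tan(x/2)/x$, whose value at $x_c$ is $1/p$.
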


 Let us talk about the geometric meaning of the lemma before proving it.  The
 lemma means that the curve $y = \cos x$ is always above the curve $y = 1 - A
 |x|^p$ provided that $A$ is a sufficiently large positive number.  Besides,
 $A_p$ is the least possible value of $A$ for this to happen.

\begin{proof}
 Since $p \le 2$, $f(x)$ is well-defined and continuous in $[0,\infty)$.
 Moreover, $f(x) > (1-\cos x) / (x+2\pi)^p = f(x+2\pi)$ for all $x > 0$ and
 $f(x) > f(2\pi-x)$ for $0 \le x < \pi$ because $p > 0$.
 Hence, $A_p = \max \left\{ f(x) \colon x\in [0,\pi] \right\} \ge f(\pi) > 0$
 and the maximum is attained by a certain $x_c \in [0,\pi]$.

 Surely, $\left. df/dx \right|_{x=x_c} = 0$ which can be simplified to
 \begin{equation}
  p\ \tan \frac{x_c}{2} = x_c . \label{E:x_c_condition}
 \end{equation}
 Note that the slope of the curve $y = \tan (x/2)$ is strictly increasing for
 $x\in [0,\pi)$ and is equal to $1/2$ at $x = 0$.
 Hence, for $p = 2$, the only solution of Eq.~\eqref{E:x_c_condition} in the
 domain $[0,\pi]$ is $x_c = 0$.
 Whereas for $p\in (0,2)$, $f(0) = 0$.
 So, $x_c > 0$ as $A_p > 0$.
 Now consider the continuous function $\tilde{f}(x) = \tan (x/2) / x$ in
 $(0,\pi)$ with $\tilde{f}[(0,\pi)] = (1/2,\infty)$.
 This function is strictly increasing in $(0,\pi)$ for $d\tilde{f}(x)/dx = (x
 - \sin x ) \sec^2 (x/2)  / 2x^2 > 0$ for $0 < x < \pi$.
 Thus, the equation $\tilde{f}(x) = 1/p$ has a unique solution in the domain
 $(0,\pi)$ for all $p\in (0,2)$.
 Clearly, this unique solution is the required $x_c$ that maximizes $f(x)$.
 More importantly, since $\tilde{f}$ is strictly increasing in $(0,\pi)$,
 $x_c$ decreases as $p$ increases.

 Since the L.H.S. and R.H.S. of Eq.~\eqref{E:power_inequality} are even
 functions, we only need to prove its validity for $x \ge 0$.
 The case of $x > 0$ is a consequence of $A_p \ge (1-\cos x) / x^p$ for all $x
 > 0$; whereas the case of $x = 0$ is trivial.
 Finally, the if and only if condition for Eq.~\eqref{E:power_inequality} to be
 an equality follows from the fact that $f(x)$ is maximized by a unique $x =
 x_c$.
\end{proof}

\begin{corollary} \label{Cor:p_qslb}
 Let $p > 0$, and $H = \sum_{j=1}^n E_j \Eeigket{j}\Eeigbra{j}$ be a
 time-independent Hamiltonian acting on an $n$-dimensional Hilbert space.
 Then the time $\tau$ needed to evolve a pure state $|\Phi(0)\rangle$ to
 $|\Phi(\tau)\rangle$ under the action of $H$ is lower-bounded by 
 \begin{equation} \label{E:p_qslb}
  \tau \geq \tau_{c1} \equiv \hbar \left( \frac{1-\sqrt{\epsilon}}{A_p
  \aveE{p}} \right)^{\frac{1}{p}} .
 \end{equation}
 Here $\epsilon = F(|\Phi(0)\rangle,|\Phi(\tau)\rangle) \equiv | \langle\Phi(0)
 |\Phi(\tau)\rangle |^2$ is the fidelity between the two states, and $\aveE{p}$
 is the $p$th moment of the absolute value of energy of the system defined by
 Eq.~\eqref{E:Ep_Def}.
 Also, $A_p$ is given by Lemma~\ref{Lem:power_inequality} if $p\le 2$ and $A_p$
 is defined to be $A_2$ otherwise.
 Actually, we can slightly optimize the bound in Eq.~\eqref{E:p_qslb} to
 \begin{equation} \label{E:p_qslbopt}
  \tau \geq \tau_{c2} \equiv \frac{\hbar}{\dE{p}} \left(
  \frac{1-\sqrt{\epsilon}}{A_p} \right)^{\frac{1}{p}} ,
 \end{equation}
 where $\dE{p}$ is the $p$th root of the $p$th moment of the absolute value of
 energy of the system minimized over the reference energy level as defined by
 Eq.~\eqref{E:DpE_Def}.
 More importantly, these two bounds are tight for all $\epsilon \in [0,1]$ if
 $p \le \limitingp$.
\end{corollary}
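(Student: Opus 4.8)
The plan is to expand the initial state in the energy eigenbasis and turn the fidelity constraint into a pointwise inequality attackable with Lemma~\ref{Lem:power_inequality}. Writing $|\Phi(0)\rangle = \sum_j c_j \Eeigket{j}$ and setting $p_j = |c_j|^2$, the evolved overlap is $\langle\Phi(0)|\Phi(\tau)\rangle = \sum_j p_j \exp(-i E_j \tau/\hbar)$. Since $|z| \ge \operatorname{Re} z$ for every complex $z$, I would first obtain $\sqrt{\epsilon} \ge \operatorname{Re}\langle\Phi(0)|\Phi(\tau)\rangle = \sum_j p_j \cos(E_j \tau/\hbar)$.

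The next step applies the estimate $\cos x \ge 1 - A_p|x|^p$ of Lemma~\ref{Lem:power_inequality} (valid for $p\le 2$) term by term with $x = E_j\tau/\hbar$. Summing against the weights $p_j$ and recognising $\sum_j p_j |E_j|^p = \aveE{p}$ gives $\sqrt{\epsilon} \ge 1 - A_p (\tau/\hbar)^p \aveE{p}$, which rearranges directly into $\tau \ge \tau_{c1}$ of Eq.~\eqref{E:p_qslb}. To sharpen this to $\tau_{c2}$, I would exploit that replacing $H$ by $H - xI$ multiplies $|\Phi(\tau)\rangle$ by the global phase $\exp(i x\tau/\hbar)$ and hence leaves $\epsilon$ invariant; applying the first bound to $H-xI$ and minimising the right-hand side over $x\in{\mathbb R}$ converts $\aveE{p}$ into $\min_x \aveE{p}(H-xI,|\phi\rangle) = [\dE{p}]^p$, which yields Eq.~\eqref{E:p_qslbopt}. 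For $p > 2$, where the pointwise inequality breaks down near the origin, I would fall back on the $p=2$ estimate with $A_p := A_2$ as the statement prescribes.

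For tightness when $p \le \limitingp$, I would build a saturating example on three energy levels. Take eigen-energies $0$ and $\pm E_c$ with $E_c = \hbar x_c/\tau$, where $x_c$ is the maximiser of Lemma~\ref{Lem:power_inequality}, together with symmetric weights $p_0 = 1-q$ and $p_\pm = q/2$ for a parameter $q$. By symmetry the overlap $\langle\Phi(0)|\Phi(\tau)\rangle = 1 - q(1-\cos x_c)$ is real, so $|z|\ge\operatorname{Re} z$ is saturated whenever this number is non-negative; and since every phase argument equals $0$ or $\pm x_c$, Lemma~\ref{Lem:power_inequality} holds with equality. A short computation using $A_p = (1-\cos x_c)/x_c^p$ then shows the bound is met exactly, while the symmetry of the spectrum forces the optimal reference level to be $x=0$, so that $\dE{p} = \aveE{p}^{1/p}$ and both $\tau_{c1}$ and $\tau_{c2}$ are simultaneously saturated.

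The crux is verifying that this family reaches every $\epsilon\in[0,1]$, and this is precisely where the threshold $p = \limitingp$ enters. As $\sqrt{\epsilon} = 1 - q(1-\cos x_c)$ decreases from $1$ as $q$ grows, the full range $\sqrt{\epsilon}\in[0,1]$ is attainable with a non-negative real overlap exactly when $\cos x_c \le 0$, i.e.\ $x_c \ge \limitingp$. I would pin down the threshold using the defining relation $p\tan(x_c/2) = x_c$ of Eq.~\eqref{E:x_c_condition} together with the monotone decrease of $x_c$ in $p$ from Lemma~\ref{Lem:power_inequality}: evaluating at $x_c = \limitingp$ gives $p = \limitingp$, so $x_c \ge \limitingp \iff p \le \limitingp$. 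The main obstacle is this final reachability argument — confirming that simultaneous equality in both inequalities remains compatible with an arbitrary target fidelity exactly when $p \le \limitingp$ — whereas the bound derivations themselves are routine once Lemma~\ref{Lem:power_inequality} is available.
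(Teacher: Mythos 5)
Your argument tracks the paper's proof step for step: the same chain $\sqrt{\epsilon} \ge \operatorname{Re}\langle\Phi(0)|\Phi(\tau)\rangle \ge \sum_j |c_j|^2 \cos(E_j\tau/\hbar) \ge 1 - A_p \tau^p \aveE{p}/\hbar^p$ for $p\le 2$, the same reference-level shift to get Eq.~\eqref{E:p_qslbopt}, the same three-level saturating state (your $q$ is the paper's $\beta$ in Eq.~\eqref{E:magic_state}), and the same identification of the threshold by evaluating $p\tan(x_c/2)=x_c$ at $x_c=\limitingp$ together with the monotonicity of $x_c$ in $p$.

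The one genuine gap is the case $p>2$. ``Falling back on the $p=2$ estimate'' yields $\tau \ge \hbar\left[(1-\sqrt{\epsilon})/(A_2\aveE{2})\right]^{1/2}$, a bound in terms of the \emph{second} moment, whereas Eq.~\eqref{E:p_qslb} is stated in terms of $\aveE{p}$; the conversion requires the moment-monotonicity (Lyapunov) inequality $\aveE{p}^{1/p}\ge\aveE{2}^{1/2}$, which the paper invokes explicitly and you omit. Two smaller points. First, your claim that spectral symmetry forces the optimal reference level to $x=0$, so that $\dE{p}=\aveE{p}^{1/p}$ on the saturating state, relies on convexity of $x\mapsto\aveE{p}(H-xI,|\Phi(0)\rangle)$, which is clear for $p\ge 1$ but not obvious for $p<1$; the paper avoids this entirely by noting that $\tau_{c1}\le\tau_{c2}\le\tau$ always holds, so saturating the weaker bound $\tau_{c1}$ automatically saturates $\tau_{c2}$ --- you may want to adopt that cleaner route. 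Second, your reachability analysis correctly locates the obstruction at $q\le 1$, i.e.\ $\cos x_c\le 0$; just make sure to state that the binding case is $\epsilon=0$, since for $\epsilon$ near $1$ the state remains valid even when $p>\limitingp$ (which is exactly the content of Remark~\ref{Rem:p_qslb}).
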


\begin{proof}
 We first prove Eq.~\eqref{E:p_qslb} for the case of $p \le 2$.
 The initial quantum state $|\Phi(0)\rangle$ can be written in the form
 $\sum_{j=1}^n \alpha_j \Eeigket{j}$ with $\sum_{j=1}^n |\alpha_j|^2 = 1$.
 From Lemma~\ref{Lem:power_inequality}, the time $\tau > 0$ needed to evolve to
 a state $|\Phi(\tau)\rangle$ with $F(|\Phi(0)\rangle,|\Phi(\tau)\rangle) =
 \epsilon$ must obey
 \begin{widetext}
 \begin{align} \label{E:inequality1}
  \sqrt{\epsilon} &= \left| \langle \Phi (0)|\Phi (\tau)\rangle \right|
   = \left| \sum_{j=1}^n |\alpha_j|^2 e^{-i E_j \tau / \hbar} \right| 
   \ge \left| \sum_{j=1}^n |\alpha_j|^2 \cos \left( \frac{E_j \tau}{\hbar}
    \right) \right|
   \ge \sum_{j=1}^n |\alpha_j|^2 \cos \left( \frac{E_j \tau}{\hbar} \right)
    \nonumber \\
  &\ge \sum_{j=1}^n |\alpha_j|^2 \left( 1 - A_p \left| \frac{E_j \tau}{\hbar}
    \right|^p \right)
   = 1 - \frac{A_p \tau^p}{\hbar^p} \,\sum_{j=1}^n |\alpha_j|^2 |E_j|^p
   = 1 - \frac{A_p \tau^p \aveE{p}}{\hbar^p} .
 \end{align}
 \end{widetext}
 Hence, the \QSL in Eq.~\eqref{E:p_qslb} is valid whenever $0 < p \le 2$.

 To prove the validity of this \QSL for the case of $p > 2$, we only need
 to combine Eq.~\eqref{E:p_qslb} for $p = 2$ and $\aveE{p}^{1/p} \ge
 \aveE{2}^{1/2}$ for all $p > 2$, which is a special case of the Lyapunov's
 inequality~\cite{*[{See, for example, }][{ for a proof.}] Sh95}.

 Now have just established the truth of the \QSL in Eq.~\eqref{E:p_qslb}.
 The other \QSL given by Eq.~\eqref{E:p_qslbopt} follows by the fact that the
 reference energy level has no physical meaning.
 So from Eq.~\eqref{E:p_qslb}, we can obtain a more ``optimized'' bound by
 varying the reference energy level $x$ so as to minimize $\aveE{p}(H - x I,
 |\Phi(0)\rangle)$~\cite{Chau10}.
 Therefore, Eq.~\eqref{E:p_qslbopt} follows from Eqs.~\eqref{E:DpE_Def}
 and~\eqref{E:p_qslb}.

 To show that the two \QSLs are tight bounds for $p \le \limitingp$, we only
 need to give an example of initial state that saturates the bound for all
 fidelity $\epsilon \in [0,1)$.
 And since the bound in Eq.~\eqref{E:p_qslbopt} is in general more stringent
 than the bound in Eq.~\eqref{E:p_qslb}, we only need to show that the example
 we give saturates the former bound.
 Note further that there is no need to check for the case of $\epsilon = 1$
 because the \QSLs reduce to $\tau \ge 0$ which is trivially true.
 Now we claim that the following state saturates the \QSL stated in
 Eq.~\eqref{E:p_qslb}:
 \begin{equation}
  |\Phi(0)\rangle = \sqrt{1-\beta} |0\rangle + \sqrt{\frac{\beta}{2}} \ \left(
  |-{\mathcal E}\rangle + |{\mathcal E} \rangle \right) \label{E:magic_state}
 \end{equation}
 where ${\mathcal E} > 0$ and $\beta = (1-\sqrt{\epsilon})/A_p x_c^p$ with
 $x_c$ being the maximum point defined in Lemma~\ref{Lem:power_inequality} so
 that $\cos x_c = 1 - A_p x_c^p$.
 (Note that $\beta$ is well-defined as Lemma~\ref{Lem:power_inequality} demands
 $A_p > 0$.)
 Since $p \leq \pi/2 < 2$, Lemma~\ref{Lem:power_inequality} implies $x_c > 0$.
 Thus, $\beta \ge 0$ for all $\epsilon \in [0,1]$.
 As $x_c$ is a decreasing function of $p$ obeying Eq.~\eqref{E:x_c_condition},
 $A_p x_c^p = 1 - \cos x_c \ge 1$ (and hence $\beta \le 1$ for all $\epsilon
 \in [0,1]$) whenever $0 < p \le p_c$ where $p_c$ is the critical value of $p$
 in $(0,2]$ such that $\cos x_c = 0$ and hence $x_c = \pi/2$.
 From Eq.~\eqref{E:x_c_condition}, $p_c = \limitingp$.
 In conclusion, Eq.~\eqref{E:magic_state} is a valid quantum state if $0 < p
 \le \limitingp$.
 Since $A_p, x_c > 0$ and $\epsilon < 1$, it is easy to check that for this
 particular quantum state, $\aveE{p} = \beta {\mathcal E}^p$ and
 $\langle\Phi(0)|\Phi(\tau_c)\rangle = 1 - \beta + \beta \cos x_c =
 \sqrt{\epsilon}$.
 So Eq.~\eqref{E:p_qslb} is tight for all $\epsilon \in [0,1]$ provided that
 $p \in (0,\limitingp]$.
\end{proof}

\begin{remark} \label{Rem:p_qslb}
 From the above proof, for $\limitingp < p < 2$, Eqs.~\eqref{E:p_qslb}
 and~\eqref{E:p_qslbopt} are tight for some but not all $\epsilon \in [0,1]$
 because Eq.~\eqref{E:magic_state} is a valid quantum state for $\epsilon$
 sufficiently close to $1$.
 Note further that for $p = 1$, Corollary~\ref{Cor:p_qslb} reduces to an
 earlier result obtained by Chau in Ref.~\cite{Chau10}.
 Actually, the \QSLs reported here also apply to the case of mixed state
 through the use of the purification argument by Giovannetti \emph{et al.} in
 Ref.~\cite{GLM03}.
 Hence, these two \QSLs can be regarded as fundamental limit on the minimum
 time needed to evolve a density matrix or alternatively as a fundamental limit
 on the maximum possible information processing rate of a
 system~\cite{PHYSCOMP96,ML98,GLM03}.
\end{remark}

 Table~\ref{T:compare} shows the actual evolution time $\tau$ and the \QSL
 bounds $\tau_{c2}$ for different values of $p$ for a few selected states in
 which $\tau$ can be calculated exactly.
 The larger the value of $\tau_{c2} / \tau$, the better the estimation is the
 lower bound.
 The table shows that while $\tau_{c2}$'s are different for different choice of
 $p$, in general they all give reasonably good estimates on the actual $\tau$.
 This is true even for the case of $p = 2$, which is not a tight bound.
 To understand why it is so, we start from Lemma~\ref{Lem:power_inequality} and
 Eq.~(\ref{E:p_qslbopt}).
 They imply
\begin{eqnarray}
 \frac{\tau_{c2} \dE{p}}{\hbar} & \leq & A_p^{-1/p} = \inf_{x > 0} \left\{
  \frac{x}{(1-\cos x)^{1/p}} \right\} \nonumber \\
 & \leq & \frac{\pi / 2}{[1-\cos ( \pi / 2)]^{1/p}} = \frac{\pi}{2}
 \label{E:angle_bound_p_qslb}
\end{eqnarray}
 for all $\epsilon \in [0,1]$.
 (Interestingly, similar conclusions can be deduced for other \QSL bounds
 including the time-energy uncertainty bound~\cite{Bh83} and the
 Margolus-Levitin bound~\cite{PHYSCOMP96,ML98}.
 Their proofs are straightforward and are left to the readers.)
 Let us use the notation using in Corollary~\ref{Cor:p_qslb} to express the
 initial state $|\Phi(0)\rangle$ as $\sum_j \alpha_j |E_j\rangle$.
 Provided that $\vartheta_j \equiv \tau |E_j| / \hbar \leq \pi$ for all $j$
 (that is, the phase angle $\vartheta_j$ rotated for each eigen-energy mode in
 the time evolution is at most $\pi$), we have $\tau \dE{p} / \hbar \lesssim
 \pi / 2$.
 Hence, $\tau \approx \tau_{c2}$.
 So, our \QSL bounds $\tau_{c2}$'s generally give reasonably good estimates to
 the actual evolution times $\tau$'s for all the initial states listed in
 Table~\ref{T:compare} because these states are picked so that $\vartheta_j
 \leq \pi$ for all $j$.
 Readers will find in Sec.~\ref{Sec:Connection} that this discussion is
 essential to understand why our \QSL bounds can be used to study the resources
 needed to carry out certain unitary operations.

\begin{widetext}
\begin{table*}[t]
 \centering
 \begin{tabular}{||c|c|c|c|c|c|c||}
  \hline
   initial state & $\tau$ & \multicolumn{5}{c||}{$\tau_{c2} / \tau$} \\
  \cline{3-7}
   & & $p = 0.1$ & $p = 0.5$ & $p = 1.0$ & $p = 1.5$ & $p = 2.0$ \\
  \hline
  $\frac{1}{\sqrt{2}} \left( |-{\mathcal E} \rangle + |{\mathcal E}\rangle
   \right)$ & $\frac{\pi\hbar}{2{\mathcal E}}$ & $0.9897$ & $0.9450$ & $0.8786$
   & $0.9982$ & $0.9003$ \\
  \hline
  $|\Phi(0)\rangle$ ~as defined in~Eq.~(\ref{E:magic_state}) for $\beta = 1 /
   A_1 x_c$ & $\frac{x_c \hbar}{{\mathcal E}}$ & $0.2463$ & $0.9084$ & $1.0000$
   & $0.9540$ & $0.7885$ \\
  \hline
  $|\Phi(0)\rangle$ ~as defined in~Eq.~(\ref{E:magic_state}) for $\beta =
   \frac{4}{4-\sqrt{2}+\sqrt{6}} \approx 0.794$ &
   $\frac{7\pi\hbar}{12 {\mathcal E}}$ & $0.0167$ & $0.6879$ & $0.9480$ &
   $0.9975$ & $0.8658$ \\
  \hline
  $|\Phi(0)\rangle$ ~as defined in~Eq.~(\ref{E:magic_state}) for $\beta = 1/2$
   & $\frac{\pi\hbar}{{\mathcal E}}$ & $0.9897$ & $0.9450$ & $0.8786$ &
   $0.7923$ & $0.6366$ \\
  \hline
  $\frac{1}{\sqrt{3}} \left( |0\rangle + |-{\mathcal E}\rangle +
   |{\mathcal E}\rangle \right)$ & $\frac{2\pi\hbar}{3{\mathcal E}}$ & $0.0836$
   & $0.7973$ & $0.9884$ & $0.9810$ & $0.8270$ \\
  \hline
  $\frac{1}{\sqrt{2n+1}} \sum_{k=-n}^n |k {\mathcal E} \rangle$ in the large
   $n$ limit & $\frac{2\pi\hbar}{(2n+1) {\mathcal E}}$ & $0.0025$ & $0.5316$ &
   $0.8786$ & $0.9194$ & $0.7797$ \\
  \hline
 \end{tabular}
 \caption{Comparison between $\tau$ and the lower bound $\tau_{c2}$ for
  $\epsilon = 0$ on a few states using different values of $p$ based on a
  similar table in Ref.~\cite{Chau10}.
  \label{T:compare}
 }
\end{table*}
\end{widetext}

\section{Connection Between The Metrics, Pseudo-Metrics And The Quantum Speed
 Limit Bounds}
\label{Sec:Connection}

 By comparing $\dmu{p}$ in Eqs.~\eqref{E:interpret_l_p_d} and $\dmuopt{p}$ in
 Eq.~\eqref{E:interpret_l_p_dopt} of Corollary~\ref{Cor:metric_to_qsl} with the
 \QSLs involving $\aveE{p}$ or $\dE{p}$ in Corollary~\ref{Cor:p_qslb}, we may
 interpret $\dmu{p}$ and $\dmuopt{p}$ as cost functions describing the minimum
 amount of resources needed to convert $V$ from $U$.
 In the first case, the resources refer to the product of evolution time $\tau$
 and the $p$th root of the $p$th moment of absolute value of energy of the
 system $\aveE{p}^{1/p}$ required to carry out the conversion.
 And in the second case, the resources refer to the product of $\tau$ and
 $\dE{p}$ (which is an ``optimized'' version of
 $\aveE{p}^{1/p}$)~\cite{Chau11}.
 Note that the out of the Hamiltonians that generate a given unitary operation,
 we can always pick one so that $\vartheta_j \leq \pi$ for all $j$.
 Thus, the discussion in the final paragraph of Sec.~\ref{Sec:QSL} implies that
 our cost functions are reasonably good estimates of the actual amount of
 resources required to covert $U$ from $V$.

 Three remarks are in place.
 First, since this connection is done via \QSL bounds, it works best when the
 bounds are tight for all $\epsilon$.
 For otherwise, the cost functions always overestimate the actual resources
 required.
 So, from Corollary~\ref{Cor:p_qslb}, this connection begins to lose its
 significance when $p > \pi / 2$.

 Second, from Remark~\ref{Rem:metric}, we know that this interpretation works
 whenever $p > 0$ --- that is, even in the case when $\dmu{p}$ is not a metric.
 However, Chau \emph{et al.}~\cite{CLPS12} argued that any ``reasonable'' cost
 functions $\dmu{p}$ and $\dmuopt{p}$ should be a metric and a pseudo-metric on
 $U(n)$, respectively.
 Part of the reasons is that one way to transform $V$ to $U$ is first
 transforming $V$ to $W$ and then from $W$ to $U$.
 So, $\dmu{p}$ and $\dmuopt{p}$ must satisfy the triangle inequality if the
 cost of transformation is additive --- a rather modest additional requirement
 indeed.
 In this regard, the cost functions $\dmu{p}$ and $\dmuopt{p}$ in
 Eqs.~\eqref{E:interpret_l_p_d} and~\eqref{E:interpret_l_p_dopt} are
 ``reasonable'' provided that $p \ge 1$.

 Finally, since the overall phase of a unitary operator has no physical
 significance, the cost function $\dmuopt{p}(U,V)$ is more meaningful than
 $\dmu{p}(U,V)$ in characterizing the resources needed to transform $V$ to $U$.
 Nonetheless, $\dmu{p}$ is important in its own right for it gives rise to a
 characterization on the degree of non-commutativity between two unitary
 operators $U$ and $V$ through $\dmu{p}(U V,V U)$~\cite{Chau11}.

 To summarize, we have shown that any symmetric weighted $\ell^p$-norm on
 ${\mathbb R}^n$ induces a metric and a pseudo-metric on $U(n)$ for $p \ge 1$.
 These metrics and pseudo-metric can be interpreted as ``reasonable'' cost
 functions described by tight \QSL bounds involving $\aveE{p}$ and $\dE{p}$
 respectively provided that $p \in [1,\pi/2]$.
 There is an open problem along this line of study.
 Our numerical study strongly suggests that $\dmu{p}(U,V) = \sum_{j=1}^n
 \desmu{j} [ \absarg{U V^{-1}}{j} ]^p$ is a metric on $U(n)$ for $0 < p < 1$
 whenever $\desmu{1} > 0$.
 It is instructive to prove this conjecture and to relate it to a tight \QSL
 bound.

\begin{acknowledgments}
 We would like to thank C.-H.\ F.\ Fung for his valuable discussions.
 This work is supported under the RGC grant HKU~700709P of the HKSAR
 Government.
\end{acknowledgments}

\bibliography{qc55.2}
\end{document}